\pgfplotsset{compat=newest}
\setlist{leftmargin=*}
\theoremstyle{definition}
\newtheorem{theorem}{Theorem}
\newtheorem{remark}{Remark}
\newtheorem{corollary}{Corollary}
\newtheorem{proposition}[theorem]{Proposition}
\newtheorem{definition}[theorem]{Definition}
\title{An Ambiguous State Machine \thanks{This version: \today}}
\author{
  Matt Stephenson\thanks{Friendship Research}}
\date{}
\begin{document}
\maketitle

\begin{abstract}
We show that a replicated state machine (such as a blockchain protocol) can retain liveness in a strategic setting even while facing substantial ambiguity over certain events. This is implemented by a complementary protocol called ``Machine II'', which generates a non‐ergodic value within chosen intervals such that no limiting frequency can be observed. We show how to implement this machine and how it might be applied strategically as a mechanism for ``veiling'' actions. We demonstrate that welfare-enhancing applications for veiling exist for users belonging to a wide class of ambiguity attitudes, e.g. \citep{binmore2016minimal, gul2014expected}. Our approach is illustrated with applications to forking disputes in blockchain oracles and to Constant Function Market Makers, allowing the protocol to retain liveness without exposing their users to sure‐loss.
\end{abstract}

\section{Knowing like a State Machine}
\label{sec:smith-odds}

Ken Arrow famously defined ``state'' as  ``a description of the world so complete that, if true and known, the consequences of every action would be known''.~\citep{arrow1971essays} In economics, rational agents are often modeled--to varying degrees--as being able to know and partition this state as finely as they like for the task at hand.\footnote{Note that generalizing to full knowledge, perhaps by means of rational expectation uncertainty, tends to entail a variety of counterintuitive results, including the implication of identical prior probabilities among rational agents, no possibility of learning (as the term is commonly understood), and no-trade theorems which are poor fits with the observed data to say the least.} Here we treat state machines that face limits regarding how finely they can partition their knowledge. When these state machines are tasked with permissionlessly interacting with other agents in strategic situations, these machines may encounter adversarial agents with finer partitions of state than the machine can have. This puts such a state machine, along with any of those relying on it to service their needs, at a disadvantage.

In cases where the state machine faces extreme exploitation by these more knowledgeable agents, the machine's operators may attempt to intervene. One known intervention in extreme cases is attempting to fork or transition the state such that the exploit is ``retroactively'' mitigated in future states of the machine. Another is to attempt intervention during an attack, halting the machine to avoid further loss until the exploit can be remediated. In the tradition of Bayesian Decision Theory, the former can be thought of as akin to reneging on a bet, and the latter as refusing to bet. Both approaches have well-known drawbacks, in theory and in practice. 

We suggest an alternative approach that involves the state machine implementing a mechanism that forces any adversaries to effectively "know" only what the state machine itself "knows." Interestingly, we find that this strategy is equivalent to "refusing to bet" under some conditions, except without the requirement that the machine halt. Thus, we may able to ensure the integrity of machine without compromising liveness. 

The mechanism can also be said to be ``fair'', in that the state machine draws from a set of known possible distributions without committing to any one distribution ex ante. This is because the mechanism picks an element from a set of priors in such a way that no probability measure can be placed over that set. It is therefore committing only to what it ``knows'' and not presuming, for instance, that it is always optimal or even coherent to presume any central tendency among the different priors.

In what follows we will review a canonical setup from Decision Theory, and demonstrate where and how our approach theoretically intervenes. We will characterize our mechanism formally and practically. And then we will attend to some applications in a blockchain context.

\subsection{Refusing to Play the Game}
In early treatments of decision theory (e.g.~\cite{de1931sul,ramsey2000foundations}), it was customary to collapse one's beliefs about the probability of an event $E$ obtaining into a single value. But it's well-known since~\citet{smith1961consistency} that a rational agent need not express a single probability for an event $E$; rather, they may specify a range 
\[
  \underline{p}(E) 
  \;\;\le\;\; 
  p(E) 
  \;\;\le\;\; 
  \overline{p}(E)
\]
representing the lowest and highest probabilities she is willing to entertain for $E$. This allows an agent's beliefs to be specified as an interval compatible with partial or ``imprecise'' beliefs, \(\underline{p}(E)\,\le\,\overline{p}(E)\), whereas the classical approach would directly collapse this value into a single probability, \(\underline{p}(E)=\overline{p}(E)\) 

Nevertheless, Smith makes an ingenious argument that decision theorists may treat ``interval uncertain'' agents as if they have single probabilities within the intervals. Our mechanism can seen as denying one of the conditions Smith requires for his argument, and so we offer a brief review.

\noindent
\textbf{Forcing Inter-Interval Probabilities.}
If the agent's range is $[\underline{p}(E),\;\overline{p}(E)]$, any $\alpha$ in that interval is called a \emph{medial probability}. Agents are not compelled to bet at $\alpha$ but also cannot be contradicted by doing so. Hence in a single‐shot setting, the agent might pick \(\alpha \in [\underline{p}(E),\,\overline{p}(E)]\) at will. If the interval collapses, we recover the familiar single‐probability approach.

In order to put some structure on this medial probability, Smith introduces two important conditions. The first is a sensible coherence requirement that $\underline{p}(E)$ cannot exceed $\overline{p}(E)$. The second condition, which we aim to ``undo'', is that cases $\underline{p}(E)=0$ or $\overline{p}(E)=1$ must be ruled out. This condition expresses that while an agent may refuse to bet on some odds within the interval--they are free to choose their own medial probabilities--there must exist some finite odds at which the agent is willing to bet, and the agent is also not willing to bet at arbitrarily large odds.

With these two conditions in place, Smith finds that unless the agent acts ``as if'' they have some fixed medial probability, then they can be Dutch-booked by a slightly more complex bet. That is, if an agent forever refused to “pick” and maintain any single $\alpha$, they could be subject to a complementary‐bet construction leading to sure loss.\citep[pp.~6--16]{smith1961consistency} This selection of a fixed \emph{medial probability} for each agent formally reconciles interval‐valued beliefs with single‐bet coherence. We are back to each agent having--in effect--a single fixed value probability for all events, even if the agent only has interval valued knowledge.

But recall that this argument compels a single ``medial''
probability only if $0 < \underline{p}(E) \le \overline{p}(E) < 1$. Smith notes that without such a stipulation, much of his argument ``would require obvious modifications''. Once we allow $\underline{p}(E)=0$ \emph{or} $\overline{p}(E)=1$, no single $p$ can be forced. In the next section, we construct a non-measurable device which systematically forces such extreme $0$--$1$ ranges. As a result, we will find that an agent interacting through the device might plausibly set $\underline{p}(E)=0$ and $\overline{p}(E)=1$, or anything in between, with no contradiction or dutch-booking. 

\section{Measurable and Non-measurable Machines}
Here we will introduce two machines, Machine I and Machine II, which generate frequencies. Machine I generates measurable frequencies and can be said to provide grounds to evaluate outputs according to a fixed probability. Machine II, by contrast, has the properties of producing maximally ambiguous frequencies. The frequencies are designed to be such that an agent cannot expect convergence to any single limit, even under arbitrarily many draws from the device. There is thus no reason for the agent's posterior probability to fix a unique value in $[\,\underline{p}, \,\overline{p}]$. Having verified the properties of  ``Machine~II'', an agent conditioning on that frequency is not forced to admit a single probability value. 

\paragraph{Setup.} 
We consider a two‐step device (``machine'') which generates a 0--1 outcome:
\[
  z \;\in\;\{0,1\}.
\]
\begin{enumerate}
\item In \textbf{Step~1}, the machine draws a continuous random variable
  \(
    w \sim \mathrm{Uniform}(0,1).
  \)
\item In \textbf{Step~2}, we fix a subset $S\subseteq[0,1]$.  
  The machine sets
  \[
    z \;=\;\begin{cases}
      1 &\text{if }\,w\in S,\\
      0 &\text{if }\,w\notin S.
    \end{cases}
  \]
\end{enumerate}

An agent must choose between two actions whose payoffs depend on the eventual outcomes of repeatedly using this machine.  If $S$ is measurable, we call this \emph{Machine~I} (a single-valued ``risk'' model).  If $S$ is \emph{non‐measurable}, we call it \emph{Machine~II}, which sustains ambiguity. 

\subsection{Machine I (measurable subset)}
\begin{definition}[Machine I]
\label{def:machineI-meas}
  Let $(w_n)_{n=1}^\infty$ be an i.i.d.\ sequence of random variables,
  each $w_n \sim \mathrm{Uniform}(0,1)$.  
  Let $S \subset [0,1]$ be a Lebesgue-measurable set with measure 
  $p \in [0,1]$.
  Define the sequence $(x_n)_{n=1}^\infty$ by
  \[
    x_n 
    \;=\;
    \begin{cases}
      1 & \text{if } w_n \in S,\\
      0 & \text{otherwise}.
    \end{cases}
  \]
  By the strong law of large numbers,
  \[
    \lim_{N \to \infty} \frac{1}{N} \sum_{n=1}^N x_n 
    \;=\; p 
    \quad\text{with probability 1}.
  \]
  Hence the \emph{lower empirical frequency} and \emph{upper empirical frequency} of
  $(x_n)$ coincide a.s.\ with the single value $p$:
  \[
    \underline{p}(x) \;=\; \overline{p}(x) \;=\; p
    \quad(\text{a.s.}).
  \]
  Consequently, the corresponding \emph{lower odds} and \emph{upper odds} for the
  event $E=\{x_n=1\}$ also coincide:
  \[
    \underline{P}(E) \;=\; \overline{P}(E)
    \;=\; \frac{p}{1-p}.
  \]
  Thus, from a measure-theoretic point of view, Machine I
  yields a single, precise probability for $E$.
\end{definition}

\subsection{Machine II (Non-Measurable Subset)}
In contrast to Machine I's standard ``risk'' output, Machine II is a pathologically mixing RNG designed to resist predictability even under repeated interactions. 

\label{def:MII}
\begin{definition}[Machine II]
  \emph{Let $(x_n)_{n=1}^\infty$ be a sequence 
  taking values in $\{0,1\}$.  Define the lower empirical frequency}
  \[
    \underline{p}(x) 
    \;=\;
    \inf_{\substack{k\ge1\\m_1<m_2<\dots<m_k}}\, 
      \limsup_{n\to\infty} \;\frac{x_{m_1+n} + x_{m_2+n} + \dots + x_{m_k+n}}{k}
  \]
  \emph{and the upper empirical frequency}
  \[
    \overline{p}(x) 
    \;=\;
    \sup_{\substack{k\ge1\\m_1<m_2<\dots<m_k}}\, 
      \liminf_{n\to\infty} \;\frac{x_{m_1+n} + x_{m_2+n} + \dots + x_{m_k+n}}{k}.
  \]
  \emph{We examine the limiting fraction of 1’s in *every* shifted subsequence; \(\underline{p}(x)\) is the greatest lower limit of these fractions, while \(\overline{p}(x)\) is the least upper limit. We say $(x_n)$ is a Machine II sequence if it can produce values arbitrarily close to its bounds without converging to a single limit. i.e. for every $\epsilon>0$,}
  \[
    \limsup_{n\to\infty}
      \;\frac{x_{m_1+n} + \dots + x_{m_k+n}}{k}
    \;<\;\overline{p}(x) + \epsilon
    \quad\text{and}\quad
    \liminf_{n\to\infty}
      \;\frac{x_{m_1+n} + \dots + x_{m_k+n}}{k}
    \;>\;\underline{p}(x) - \epsilon
  \]
  holds for all sufficiently large $k$. 
 For an idealized Machine II, $\underline{p}(x)=0$ and $\overline{p}(x)=1$.
\end{definition}

That is, for any chosen \(\varepsilon > 0\), there is some subsequence that stays within \(\varepsilon\) of that fraction of 1’s over arbitrarily long stretches. In other words, the sequence never settles to a single limit, but keeps “visiting” neighborhoods of different fractions indefinitely. As we will see in Section \ref{sec:agenttype}, because one can find arbitrarily long subsequences whose frequency of 1’s stays within \(\varepsilon\) of any desired value, a wide range of agent choices can be justified ex ante on the grounds of such sequences.

In the classical style, the agent evaluating this machine will have \emph{lower odds}
\[
  \underbar{P}(E) \;=\;
  \frac{\underline{p}}{1 - \underline{p}}
\]
and \emph{upper odds}
\[
  \overline{P}(E) \;=\;
  \frac{\overline{p}}{1 - \overline{p}}.
\]

\subsection{Machine II as a ``dutch book refuser''}
\paragraph{Undoing the Conditional Bet Argument}
Recall that~\cite{smith1961consistency} showed that when $0 < \underline{p}(E) \le \overline{p}(E) < 1$, 
one can force a single ``medial'' probability via arguments of consistency against dutch books.
But without this condition, any finite odds for or against $E$ can be refused without contradiction: if
\[
  \underline{p}(E)=0
  \quad\text{and}\quad
  \overline{p}(E)=1,
\]
then the agent’s \emph{lower odds} for $E$ may be taken as $0$, 
while the \emph{upper odds} are unbounded. 
Thus an agent can consistently ``assign'' probability near $0$ (or near $1$, or anything in between) and not be shown to suffer a sure loss by refusing any finite odds for or against $E$. A state machine (or agent) implementing using Machine II can avoid a dutch book being made against it.

Losing fixed medial probabilities denies us access to the friendly functional form of ``fixed value'' Bayes Theorem calculations about pure beliefs. However, it is still possible to recover this form if the user expresses a neutral preference within the interval.\citep{binmore2016minimal} Specifically, the functional that appeared as a medial ``probability'' in Smith is now interpreted as a preferential weighting of interval probabilities. We will revisit this argument in \ref{sec:dbook} where we discuss using Machine II for pricing ambiguous assets.

\subsection{Machine II Implementation}
We will here characterize the method of implementing Machine II. We describe the procedure by which our mechanism generates ambiguity as ``veiling'' between chosen intervals. Our method follows and slightly extend the procedure of using non-ergodic draws from a Cauchy distribution as described in \citep{stecher2011generating} and \citep{pinter2022make}, and note it is most effectively run with some cryptographic protections in place, e.g. processed with threshold encrypted MPC or inside a Trusted Execution Environment (TEE). 

\begin{algorithm}[ht]
\caption{Machine~II Sequence Generation}
\label{alg:CauchyMachine}
\begin{algorithmic}[1]

\STATE \textbf{Distribution Setup:}
Let $F(x) = \frac{1}{\pi} \arctan\Bigl(\frac{x - x_0}{\gamma}\Bigr) + \frac{1}{2}$ where $x_0$ is a location parameter and $\gamma > 0$ is a scale parameter.

\vspace{1ex}

\STATE \textbf{Step 1 (Initial Seeding):} Draw $Z_{0}$ uniformly from $[0,1]$ using a secure RNG.

\vspace{1ex}

\STATE \textbf{Step 2 (First Cauchy Draw):} Sample $Z_{1}$ from $\mathrm{C}[Z_0, 1]$, a Cauchy distribution with location $Z_0$ and scale $1$.

\vspace{1ex}

\STATE \textbf{Step 3 (Iterative Non-Ergodic Updates):} Fix constants $\phi, \psi \in [0,1]$. For each $n \geq 2$:
\[
Z_n \sim \mathrm{C}\Bigl[Z_{n-1}, \phi |Z_{n-2}| + \psi\Bigr]
\]
meaning $Z_n$ is drawn from a Cauchy distribution whose location 
is $Z_{n-1}$ and whose scale parameter is $(\phi\,|Z_{n-2}| + \psi)$.
Optionally, each draw may also include new random seed bits from 
secure sources, for additional unpredictability.

\vspace{1ex}

\STATE \textbf{Output:} The sequence $\{Z_n\}_{n=0}^\infty$ resists statistical convergence, suitable for Machine~II.

\end{algorithmic}
\end{algorithm}

Because the parameters $\phi$ and $\psi$ govern how the scale of the 
distribution changes in each step (and because $X_n\sim U[0,1]$ 
enters when normalizing or shifting), 
this process is \emph{non-ergodic} and exhibits 
inconsistent sample quantiles. In other words, the fraction of times $Z_n$ lies in any given interval 
need not converge, so an outside observer cannot pin down a 
unique limiting distribution. One additional step that helps safeguard our method even against generalized learning approaches (including theoretical objects like Solomonoff inductors) is that we may charge a positive sum to observe each draw. 

\subsubsection*{From Ambiguity Generation to Ambiguity Application}
We have now defined and analyzed \emph{Machine~II}, a pathological RNG
whose outputs can resist convergence for as long as we like and thus remain
non-ergodic. In practical terms, \emph{Machine~II} offers us a
``source of perpetual ambiguity'': no agent can pin down a single
limiting frequency from its draws.

In the next section, we will leverage this property by designing
mechanisms that condition on the ambiguous frequencies generated by Machine II. Decisions, actions, outcomes, etc. which condition on these frequencies are said to be \textbf{veiled}.\footnote{Binmore uses ``Muddled Strategies'' to refer to the ambiguous choice of strategy in a Battle of the Sexes game/\citep{binmore2008rational}} We can speak of ``veiled trades'', for instance, as trades which are resolved in some way via a Machine II process (e.g. Machine II veils the final price, or who gets what, etc.) Trades may take place when such users commit to a \emph{set} of plausible outcomes consistent with \emph{Machine~II}'s ambiguous draws. Each user remains free to aggregate outcomes within that set, rather than decompose them into single-point probabilities. 

In the next section we will see why veiling can support applications that better serve users. And in a later section, we will show how it can be essential to preserve the protocol's liveness and stability.

\section{Veiling Applications}
Veiling is a feature that a state machine can implement or disregard for specific use cases. We aim here to explain why a state machine may wish to do so. Machine II's veiling procedure can be said to implement an ``ambiguous device''. As \citep{badeAmbiguityDesign2023} notes ambiguous devices may increase efficiency in many general cases, enabling ``Pareto improvements over the set of all Bayes-Nash implementable social choice functions.'' But as Bade notes, Veiling may not affect the behavior of every agent in the general case.\footnote{Agents who can alter their \emph{preferences} according to an independent coin flip, for example, are free to fix medial probabilities if they like.} Thus, because we considering a state machine which may interact with any variety of agents, we adopt the perspective of \cite{sutton2006flexibility}, who noted that under a Veiling-type procedure, ``it is difficult to find any compelling or a priori justification for imposing any single behavioural postulate on all agents.'' 

In what follows, we will make use of a flexible OHEU framework from \citep{grant2023decision} that allows us to express a wide range of ambiguity attitudes across multiple canonical models from the literature. This allows us to consider agents according to a type space, where types correspond to various parameterizations of the OHEU function. We will first aim to show veiling's use in expanding the set of ex ante reasonable actions for any of these users. Then in subsequent sections we will demonstrate this with a specific example of ambiguity values in a Constant Function Market Maker.

\subsection{Users who Need Veiling}
\label{sec:agenttype}
The goal of an application which makes use of Machine II is to serve users with beliefs and utilities whose desired actions, in at least some situations, require recourse to Veiling. We will suggest this might, in practice, be a substantial fraction of users. In a set of sister papers, we will be more specific about plausible efficiency-enhancing applications of veiling in an onchain context. But here we retain the protocol-level perspective, seeing Machine II as enabling applications and interactions among some types of users.

\textbf{Agent Types and the Type Space.}  
Let $\Theta$ be a measurable type space of agents.  
Each $\theta \in \Theta$ is an ``agent type,'' here interpreted as corresponding to a pair $(\alpha_\theta,\rho_\theta)$ which can be used to parameterize the canonical ambiguity models \citep{grant2023decision}:

\begin{equation}
   U(w,b)
   \;=\;
   \bigl[\alpha\,w^\rho \;+\; (1-\alpha)\,b^\rho\bigr]^{\tfrac{1}{\rho}}.
\label{eq:OHEU}
\end{equation}

Where $w$ and $b$ denote the best and worst case outcomes, and $\alpha$ and $\rho$ are parameters that determine the agent's elasticity of substitution for the worst and best cases. These parameters may be adjusted to model the agent's attitude toward ambiguity, with intuition provided by some useful parameterizations of this function:
\begin{itemize}
    \item Constant Marginal Rate of Substitution \citep{hurwicz1951optimality,gul2014expected}: If $\rho=1$, the relative difference between worst and best case disappears from the model. As a result $\alpha$ is the constant marginal rate of substitution between the worst and best case, as in the Hurwicz Criterion. 
    \item Maxmin preferences~\citep{gilboa1989maxmin} The special case of the Hurwicz criterion of $\alpha=1$ results in considering only the worst case, sometimes called "the utility of the paranoid." Note also that because $w\ge b$, picking $\rho=-\infty$ gives the same result but in a Leontief manner.
    \item Non-constant MRS \citep{binmore2008rational}. (At $\rho = 0$, the CES utility becomes (Binmore--Cobb--Douglas):
\[
  \lim_{\rho \to 0}\bigl[\alpha w^\rho + (1-\alpha)b^\rho\bigr]^{\frac{1}{\rho}} = w^\alpha b^{1-\alpha}.
\]
This establishes the familiar (and in consumer theory, axiomatic) condition of diminishing marginal rates of substitution, with $\alpha=1$ now a geometric weighting of the outcomes.
\end{itemize}

\medskip

Each approach has its virtues, but it's worth noting that the special case of $\rho=1, \alpha = 1/2$ has the virtue of expressing ``neutrality'' over the outcomes, and is unique in allowing functional recourse to Bayes' Theorem, albeit with a somewhat different interpretation \citep{binmore2016minimal}.\label{rem:neutral} In the subsequent section on CFMMs we will find this user sometimes presents an interesting corner case that avoids certain ``sure loss'' outcomes.

We consider the above functions, with varying parameter values, to characterize the typespace of agents facing ambiguity. Thus the goal for applications serving any such users will be to identify some set of users $\nu(\theta)$ with (possibly varying) valuation functions $(\alpha_\theta,\rho_\theta)$, and then use veiling in order to allow those users' actions $\theta \mapsto a(\theta)$ to be ex ante ``reasonable''. \citet{sutton2006flexibility} defines a ``reasonable'' action as follows:

\noindent
\textbf{Reasonable Actions.}  
Denote by $\widehat{A} \,(\subseteq A)$ the (finite) set of ``reasonable actions.'' 
We assume that there is some non-degenerate distribution of agent types from which agents are drawn at random. 
Faced with a set $\widehat{A}$ of reasonable actions, each type $\theta$ is mapped into one chosen action 
\[
  a(\theta) \;\in\; \widehat{A}.
\]
Hence there is a \emph{profile} of chosen actions across the population, determined by $\theta \mapsto a(\theta)$.

A reasonable action $a'$ is one for which there exists some 
justifying sequence $(x_{1},x_{2},\dots)$, $x_{t} \in \{0,1\}$, 
making the expected payoff of $a'$ at least as large as any other $a$ 
for all future times:
\[
  \mathbb{E}\,\pi\bigl(a' \,\bigm\vert\, p_t,p_{t+1},\dots\bigr)
  \;\;\ge\;\;
  \mathbb{E}\,\pi\bigl(a \,\bigm\vert\, p_t,p_{t+1},\dots\bigr),
  \quad \forall\, a\in\mathcal{A}.
\]

\begin{proposition}[Reasonable Actions Under Ambiguity Require Veiling]
\label{prop:M2_reasonable}
Consider an agent of type $\theta \in \Theta$ whose beliefs about an event $E$ lie in 
the interval $[\underline{p}, \overline{p}]$, with corresponding 
\emph{lower odds} and \emph{upper odds} given by
\[
   \underbar{P}(E)
   \;=\;
   \frac{\underline{p}}{1 - \underline{p}}
   \quad\text{and}\quad
   \overline{P}(E)
   \;=\;
   \frac{\overline{p}}{1 - \overline{p}}.
\]
By construction, the agent’s utility over a worst-case payoff $b$ and best-case 
payoff $w$ is given by $(\alpha_\theta,\rho_\theta)$ , where $\alpha \in [0,1]$ and $\rho$ governs substitutability between best/worst outcomes in \ref{eq:OHEU}.

Now let $\{X_{t}\}$ be an infinite sequence drawn from 
\emph{Machine~II} (Definition~\ref{def:MII}), so that it visits values 
near $\underline{p}$ and $\overline{p}$ in a non-ergodic fashion 
without converging to a single limiting frequency.  For each period $t$, 
define $p_{t}$ to be the effective probability of $E$ in that period 
(e.g.\ $p_{t} \in \{\underline{p},\overline{p}\}$).  

Because $\{X_t\}$ remains non-convergent, the agent can always maintain 
both worst-case $(\underline{p})$ and best-case $(\overline{p})$ 
scenarios as plausible. As a result, \emph{Machine~II} \textbf{expands} the set of actions that can be justified for agents $\theta \in \Theta$.
\end{proposition}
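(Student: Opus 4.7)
The strategy is to fix an arbitrary type $\theta=(\alpha_\theta,\rho_\theta)$ and compare the set of Sutton-reasonable actions attainable under Machine I with those attainable under Machine II, showing that the latter strictly contains the former. The key input is the sequence property of Definition \ref{def:MII}: a Machine II draw visits neighborhoods of every $q \in [\underline p, \overline p]$ via arbitrary-length shifted subsequences whose partial averages remain within $\epsilon$ of $q$.

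Under Machine I (Definition \ref{def:machineI-meas}), the strong law collapses the empirical frequency to a single limit $p$ almost surely; consequently the OHEU functional in \eqref{eq:OHEU} degenerates to a point (worst and best cases coincide at the same $p$), and only actions maximizing the degenerate OHEU at $p$ are reasonable. Call this set $\widehat{A}_I$. Under Machine II, I would fix any candidate action $a'$ that is OHEU-optimal for $\theta$ at some target frequency $q^* \in [\underline p, \overline p]$, and construct a justifying sequence $(x_t)$ using Definition \ref{def:MII}. Because Machine II admits shifted subsequences whose partial averages lie within $\epsilon$ of any target $q \in [\underline p, \overline p]$ over arbitrarily long windows, one can splice together windows whose effective frequencies realize both the worst-case $(\underline p)$ and best-case $(\overline p)$ scenarios in turn. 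The OHEU functional then evaluates $a'$ under a genuinely ambiguous pair $(w,b)$, and weak optimality of $a'$ at $q^*$ transfers to the constructed $(p_t)$ by continuity of \eqref{eq:OHEU} in its two arguments. Varying $q^*$ over $[\underline p, \overline p]$ yields a justifying sequence for every action that is OHEU-optimal somewhere in the interval, so $\widehat A_I \subseteq \widehat A_{II}$, with strict inclusion whenever the best-response correspondence $q \mapsto \arg\max_a \mathbb E\,\pi(a \mid q)$ is non-constant on the interval.

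The main obstacle is the continuity-based transfer of optimality for OHEU specializations where \eqref{eq:OHEU} is only defined as a limit, namely the Cobb--Douglas case ($\rho=0$) and the Leontief/maxmin case ($\rho \to -\infty$). I would handle these by establishing uniform convergence of the OHEU family on compact subsets of its parameter space away from the endpoints, and then passing to the endpoint limits using the standard extension from \citep{grant2023decision}, so that the continuity argument extends across the full type space $\Theta$. A secondary subtlety is ensuring the spliced windows can be realized by the particular implementation of Algorithm \ref{alg:CauchyMachine}; since that algorithm is non-ergodic by construction, I expect this reduces to checking that its quantile process is dense in $[\underline p, \overline p]$ infinitely often, which follows from the heavy-tailed Cauchy updates.
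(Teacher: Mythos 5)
The paper never actually proves Proposition~\ref{prop:M2_reasonable}: there is no \texttt{proof} environment following it, and the intended argument is the informal one embedded in the statement itself together with the gloss after Definition~\ref{def:MII} (``because one can find arbitrarily long subsequences whose frequency of 1's stays within $\epsilon$ of any desired value, a wide range of agent choices can be justified ex ante''). Your proposal is therefore not competing with an existing proof so much as supplying one, and its skeleton --- compare the singleton set of justifying frequencies under Machine~I (where the strong law forces $w=b$ and the OHEU functional degenerates) with the full interval $[\underline p,\overline p]$ available under Machine~II, concluding $\widehat A_I \subseteq \widehat A_{II}$ with strictness exactly when the best-response correspondence is non-constant on the interval --- is a faithful formalization of what the paper asserts. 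Your explicit caveat that the expansion is only weak unless best responses vary over the interval is sharper than anything in the paper.

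Two steps need more care. First, you run together two distinct justification channels: (a) frequency-based justification, where the justifying sequence tracks a single $q^*$ and $a'$ is optimal at $q^*$; and (b) aggregator-based justification, where the tail always contains both $\underline p$-like and $\overline p$-like stretches and $a'$ maximizes $U_{\alpha_\theta,\rho_\theta}(w,b)$. These justify different sets of actions, and your splicing construction (alternating $\underline p$- and $\overline p$-windows) actually undermines channel (a): during a long window near $\underline p$, an action optimal only at a $q^*$ near $\overline p$ fails Sutton's ``for all future times'' requirement as you have quoted it. Conversely, a sequence that stays within $\epsilon$ of an interior $q^*$ forever converges, and so is not an idealized Machine~II sequence (which requires $\underline p(x)=0$ and $\overline p(x)=1$); the justification has to rest on arbitrarily long \emph{subsequences}, as the paper's own remark indicates, so you must state which reading of $\mathbb{E}\,\pi(a'\mid p_t,p_{t+1},\dots)$ you are using before the continuity transfer can be checked. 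Second, the closing paragraph about Algorithm~\ref{alg:CauchyMachine} is out of scope --- the proposition is stated for an abstract Machine~II sequence per Definition~\ref{def:MII}, not for the Cauchy implementation --- and the uniform-convergence treatment of the $\rho=0$ and $\rho\to-\infty$ endpoints, while harmless, is more machinery than the (essentially qualitative) claim requires.
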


\subsection{User-friendly Veiling}
Let $\nu(\theta)$ be the fraction (or measure) of agents of type $\theta$. As application wants to identify some sufficiently large and/or profitable subset $\nu(\theta)$ whose beliefs or preferences may be ambiguous. The use of Machine II can make the actions of such a user ``reasonable'' even when they are ambiguous about the outcome. Thus offering Veiling as \emph{an option}
expands the set of justifiable actions and benefits those agents whose beliefs and utilities cannot be captured by a single ``best-guess'' probability. 

The challenge users face in making ``reasonable'' actions under ambiguous beliefs is that their ambiguity means they cannot always trade or interact a perfectly continuous way, even if they stick to intervals in which they are confident about the best and worst case outcomes. Formally, they cannot maintain ``consistency under decomposition''. What these users requires is consistency under aggregation, which is something of a mirror image and is provided by veiling. Veiling over their ambiguous interval effectively refuses the requirement for decomposition while providing consistency under aggregation. Thus they can trade without being pinned down to a ``sure loss' region, as in the earlier example of \citet{smith1961consistency}.  

In the next section we will consider the specific example of an in-protocol oracle \ref{sec:dbook} that shows the importance of keeping these two properties straight. Ir provides a specific example of how ambiguous users can engage in ``reasonable'' actions if they have recourse to Veiling. 

\section{Veiling as Fair and Efficient}
Veiling a fair procedure because it treats ambiguous information in an unbiased way. That is, one straightforward reason to use Machine II is that it can process the protocol's actual information "correctly" and does not, for instance, impose a presumed-objective central tendency on the information. Perhaps unsurprisingly, enabling the protocol to be more expressive about its information--``honest'' about what it ``knows''--can be better for users. A related observation from \citet{badeAmbiguityDesign2023} was cited earlier, noting that veiling-type procedures can enable ``Pareto improvements over the set of all Bayes-Nash implementable social choice functions''.

In this section, we will illustrate the fairness and efficiency of veiling using an example of an in-protocol oracle, where a state machine has committed to reporting a value that accords to an off-chain asset, like the U.S. Dollar.

\subsection{Fair In-Protocol Oracle Resolution}
Oracles in general require careful design in even in the best of times.\cite{eskandari2021sok} Some classic examples of thoughtful designs include Augur\citep{peterson2019augur}, Gnosis\citep{team2017gnosis}, Slinky \citep{graczyk2024slinky} and--for the off-chain case--Chainlink\citep{breidenbach2021chainlink} and Pyth.

In-protocol oracles, where oracle inputs are derived directly from the validator set, are naturally difficult to implement as well. But they are also thought to present challenges that are uniquely difficulty for a protocol to face. Such challenges are well articulated in discussions extending from a proposal by Justin Drake that the Ethereum Protocol offer direct oracle services for the currencies in the IMF's SDR bucket.\citep{drake2020enshrined} In an extensive discussion, two focal critiques emerged about the proposal:
\begin{enumerate}
    \item Gatekeeping. By enshrining some oracles and not others, the protocol would effectively privilege certain information and possibly certain applications.
    \item Irresolvability. Because the desired information can't be programmatically verified on-chain, this introduces the possibility of consensus failures due to disagreement. 
\end{enumerate}

Vitalik Buterin has an example illustrating this dilemma.\citep{buterin2023dont} He considers a case where enshrined oracles work well enough that there is pressure on the gatekeeping function, and so ``over the years, other indices get added [extending to] rates for all countries in the G20.'' And then he describes an example of irresolvablity that can emerge:

\begin{quote}
"[Suppose] Brazil has an unexpectedly severe political crisis, leading to a disputed election... Brazil has a CBDC, which splits into two forks: the (northern) BRL-N, and the (southern) BRL-S... [most] Ethereum stakers provide the ETH/BRL-S rate. Major community leaders and businesses... propose to fork the chain to only include the "good stakers" providing the ETH/BRL-N rate, and drain the other stakers' balances to near-zero. Within their social media bubble, they believe that they will clearly win. However, once the fork hits, the BRL-S side proves unexpectedly strong. What they expected to be a landslide instead proves to be pretty much a 50-50 community split. At this point, the two sides are in their two separate universes with their two chains, with no practical way of coming back together. Ethereum... ends up cleaved in half by any one of the twenty G20 member states having an unexpectedly severe internal issue."
\end{quote}

The state machine is now tasked with reporting the value for an asset BRL which has an ambiguous identity: BRL-N and BRL-S. This situation is pictured in \ref{fig:BRA-bimodal}. The quote above describes the immense difficulties faced by a protocol which asks validators to pick between the two. We can view this difficulty as tasking the protocol with breaking the asymmetry in a semi-adversarial coordination game like Battle of the Sexes as in \ref{fig:BRA-bos}.~\citep{wikipedia2025battle} Both sides would prefer to get their way, but the costs of disagreement are high (forking the chain.) The protocol's rules, specified well in advance of reporting a single value, provide no clear way to break this asymmetry. As a result, there is no pressure on the protocol to fork.

\begin{figure}
\centering
\begin{tikzpicture}
\begin{axis}[
    domain=-1:8,
    samples=100,
    axis lines=none,
    clip=false,
    width=8cm,
    height=6cm,
    ymin=0, ymax=1.2,
]
\addplot[thick,red] {exp(-((x-1)^2)/1.2) + exp(-((x-6)^2)/1.2)};
\node at (axis cs:1,1) [above] {\small BRL-N};
\node at (axis cs:6,1) [above] {\small BRL-S};
\end{axis}
\end{tikzpicture}
\caption{Plausible but Divergent Values of the Brazil CBDC currency BRL during a hypothetical civil war}
   \label{fig:BRA-bimodal}
\end{figure}
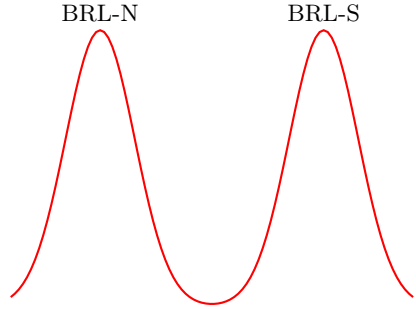

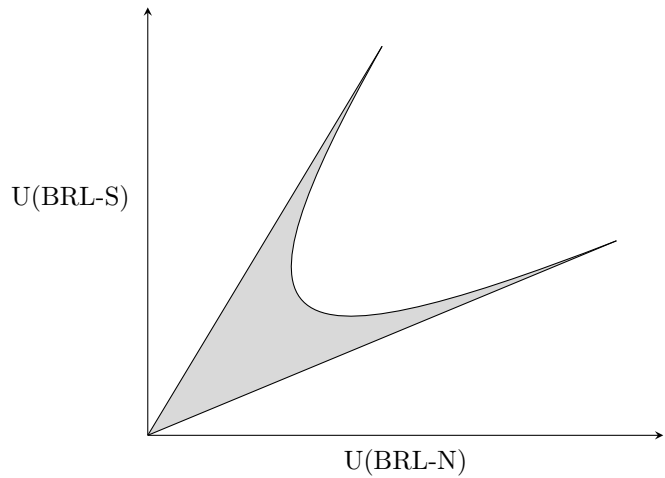
\begin{figure}
  \centering
\begin{tikzpicture}
\begin{axis}[
  axis lines=middle,
  xmin=0, xmax=2.2,
  ymin=0, ymax=2.2,
  xtick=\empty,
  ytick=\empty,
  xlabel={U(BRL-N)},
  ylabel={U(BRL-S)},
    xlabel style={
    at={(axis description cs:0.5, -0.01)},
    anchor=north
  },
  ylabel style={
    at={(axis description cs:-0.15, .5)},
    anchor=south
  },
]

\fill[gray!30]
  (axis cs:0,0) -- (axis cs:1,2)
    .. controls (axis cs:0.3,0.5) and (axis cs:0.5,0.3) ..
    (axis cs:2,1)
  -- cycle;

\draw (axis cs:0,0) -- (axis cs:1,2);
\draw (axis cs:0,0) -- (axis cs:2,1);
\draw
  (axis cs:1,2)
    .. controls (axis cs:0.3,0.5) and (axis cs:0.5,0.3) ..
    (axis cs:2,1);
\end{axis}
\end{tikzpicture}
\caption{The hypothetical on-chain BRL dispute as payoffs in an Asymmetric Coordination Game. Neither side would like to miscoordinate and fork, but each would like to get their way.)}
   \label{fig:BRA-bos}
\end{figure}

\subsection{What is the correct value for BRL?}
One obvious difficulty in the above scenario is that validators are imagined to report a single value for BRL. That suggests that one fair way to proceed is to let validators report multiple values in such a case. The BRL story makes clear that it is asking too much of an ``honest validator'' to require that they know or choose the correct side in a civil war. We might propose instead that an honest validator recognizes the situation in exactly the way Buterin \citeyear{buterin2023dont} describes: the value of BRL is ambiguous between BRL-N and BRL-S. 

But how should applications and users handle this? Honest validators had previously given a single value for BRL, but due to the civil war, they now report a set of two values (prices for BRL-N and BRL-S). One answer is that applications in such a scenario can just pick the value they like and use that. While this does push the problem onto the apps, some clever solutions exist that could enable applications to respond in a nimble way.\citep{adams2025amammauctionmanagedautomatedmarket} But even for applications that manage to coordinate on one of the values in a timely fashion, it would pose immense problems for multi-application processes. The ``same'' asset could trade at different posted prices across different applications.

Applications could also resolve to coordinate and report the ``true value'' as some specific functional to the co-domain [BRL-S, BRL-N]. For instance, applications could coordinate on using a derivative asset which is the average of the candidate values: BRL-AVG (some alternative suggestions can be found in appendix \ref{app:fairness-functions}). Lending arrangements, LP positions, and so on which were ``BRL'' would now be denominated in BRL-AVG. But even in this case where applications all coordinate on BRL-AVG, it may not resolve the actual problem for users.

\paragraph{Ambiguity about BRL-AVG} Suppose it's widely understood that once the civil war ends, the central bank will re-unify the BRL currency at some official reconciliation rate. Users may be fully confident that the central bank will eventually resolve to a value within [BRL-S,BRL-N], but they may be completely ignorant about how the bank will choose within that bracket. The bank may opt to show favoritism, be neutral, weigh other factors like inflation vs. confidence, or any number of things. Hence the bracket [BRL-S,BRL-N] may be known, but probabilities over sub-values are not.

In such a case, BRL-AVG would be understood to track the price of this eventual reconciliation. But in fact users can remain ambiguous whether BRL-S, BRL-N or any combination is in fact the correct value. These users can be systematically exposed to losing trades on BRL-AVG, for reasons that are something of a reprise of Smith's Dutch Book \ref{sec:smith-odds} from earlier. These users have a value for BRL-AVG that is not consistent under decomposition, which is what would be required of them to e.g. provide liquidity without suffering a sure loss.

Instead, these users require consistency under aggregation. That is, once a user’s bracket [BRL-S, BRL-N] is taken seriously as a \emph{set of possible values}, every combination of sub‐trades must remain within that bracket. This is provided by veiling. We call the token \textbf{BRL-Veil} that trades at a value veiled between [BRL-S, BRL-N]

\paragraph{BRL-AVG vs. BRL-Veil}In the next section, we will attend to this problem of user exploitation under BRL-AVG, using an Automated Market Maker (AMM) as a representative example. We demonstrate the following:
\begin{enumerate}
    \item Except in special cases, ambiguous users offering BRL-AVG liquidity always risk ``sure loss'' even if the price of BRL-AVG is accurate.
    \item Ambiguous users can avoid sure loss scenarios using BRL-Veil.
    \item While an ``ambiguity neutral'' user offering BRL-AVG liquidity against a single other token can avoid sure loss, we show that they can still do better with BRL-Veil than with BRL-AVG (the symmetric veiled strategy equilibrium dominates the symmetric mixed strategy for these users).
\end{enumerate}

\noindent
As a result, veiling appears to be a substantial offering in such a case.

\section{Dutch-Booking Ambiguous Users on CFMMs}
\label{sec:dbook}

In this section, we show that users who face ambiguity about the value 
of a token can be Dutch-booked (arbitraged against) when they commit to 
trades on a Constant Function Market Maker (CFMM). Recall from 
Section~\ref{sec:agenttype} that each agent type $\theta \in \Theta$ is 
characterized by a pair of aggregator parameters $(\alpha_\theta, \rho_\theta)$, 
which determine their valuation according to \eqref{eq:OHEU}. We find 
that nearly all agent types with $0 < \alpha < 1$ and $\rho \neq 0$ 
cannot provide liquidity on a CFMM without risking sure loss, 
\emph{except} for the unique ``neutral'' type with 
$\alpha=\tfrac12,\rho=1$ 
(see Remark~\ref{rem:linear-corner-case} and 
Section~\ref{sec:BoS-efficiency}).\footnote{Cases where 
$\alpha \in \{0,1\}$ are trivial since they clearly indicate a 
preference for only selling (buying) BRL-AVG unless BRL-N = BRL-S.}

Throughout, we set aside general adverse selection concerns, e.g.\ 
Loss vs.\ Rebalancing effects \citep{milionis2024automatedmarketmakinglossversusrebalancing}, 
and allow that users (and the protocol) can have perfect information 
about BRL-N and BRL-S (and thus BRL-AVG). Our finding is driven 
by the ambiguity about values within the price interval 
$[\text{BRL-S}, \,\text{BRL-N}]$,\footnote{For instance, they may trust 
the price oracles for BRL-N and BRL-S separately but have no sure 
beliefs between the two prices.} which ultimately leads to a sure-loss 
when forced to pick a \emph{single} probability.

\subsection{CFMM Setup}

\citet{angeris2023geometry} define a CFMM 
according to a concave, nondecreasing function 
$\phi : \mathbb{R}^n_{+} \to \mathbb{R}$ and a constant $k>0$. The 
\emph{reserve} vector $\mathbf{R} \in \mathbb{R}^n_{+}$ must satisfy 
$\phi(\mathbf{R}) \ge k$. In a two-token case, for example, $\phi$ 
might be $x \cdot y$ (Uniswap V2) or any generalized constant function 
in higher dimensions. Trades move the reserves within the feasible 
region $\{\mathbf{R}\colon \phi(\mathbf{R}) \ge k\}$. 
They show the condition to avoid sure loss is 
exactly the existence of a single linear (price) functional 
$p(\,\cdot\,)$ that consistently supports this feasible set. But outside of trivial and corner cases, ambiguous users do not have a single such functional, instead valuing trades with:
\begin{equation}
\label{eq:agg-rho-mini}
  U_{\alpha,\rho}(w,b)
  \;=\;
  \Bigl[\,
    \alpha\,w^\rho + (1-\alpha)\,b^\rho
  \Bigr]^{1/\rho},
  \quad
  0<\alpha<1,\;\rho\neq 0,
\end{equation}
As a result, arbitrage arises from their perspective.

\begin{proposition}[No Single Linear Functional for Ambiguous Aggregators]
\label{prop:no-linear-func}
Let $(\alpha,\rho)$ satisfy $0<\alpha<1$ and $\rho\neq0$. 
Consider the aggregator \eqref{eq:agg-rho-mini}. Then there is no 
probability vector $p$ such that 
$U_{\alpha,\rho}(X+Y) = U_{\alpha,\rho}(X) + U_{\alpha,\rho}(Y)$ 
for all two-state payoffs $X,Y \in \mathbb{R}^2$, except in the 
trivial limit $\rho\to\infty$ or $\alpha\in\{0,1\}$. Consequently, 
no single linear (price) functional can represent 
\eqref{eq:agg-rho-mini} on all two-state payoffs, implying it 
fails the no-arbitrage criterion in \cite{angeris2023geometry}.
\end{proposition}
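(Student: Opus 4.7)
The plan is to exploit the positive homogeneity of $U_{\alpha,\rho}$ and then refute linearity on a small battery of two-state payoffs. First I would observe that $U_{\alpha,\rho}(tw,tb) = t\,U_{\alpha,\rho}(w,b)$ for every $t>0$, since rescaling preserves the best/worst ordering. If the additive identity $U_{\alpha,\rho}(X+Y) = U_{\alpha,\rho}(X) + U_{\alpha,\rho}(Y)$ held for all two-state payoffs $X,Y \in \mathbb{R}^2$, then positive homogeneity of degree one combined with additivity would force $U_{\alpha,\rho}$ to coincide with a linear functional $L(x_1,x_2) = p_1 x_1 + p_2 x_2$ for some nonnegative $p$. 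So the task reduces to ruling out any linear representation of $U_{\alpha,\rho}$ viewed as a function of $(x_1,x_2)$.

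Next I would run two explicit evaluations. On the basis vectors $e_1=(1,0)$ and $e_2=(0,1)$, each has best-case $1$ and worst-case $0$, so $U_{\alpha,\rho}(e_1) = U_{\alpha,\rho}(e_2) = \alpha^{1/\rho}$, pinning any candidate at $p_1 = p_2 = \alpha^{1/\rho}$. Comparing $U_{\alpha,\rho}(1,1) = [\alpha + (1-\alpha)]^{1/\rho} = 1$ against $L(1,1) = 2\alpha^{1/\rho}$ yields the necessary constraint $\alpha = (1/2)^\rho$. Substituting this into a second test payoff $(2,1)$ gives $U_{\alpha,\rho}(2,1) = [2 - (1/2)^\rho]^{1/\rho}$, while linearity demands $L(2,1) = 3\alpha^{1/\rho} = 3/2$. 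Equating these produces the single-variable equation $(3/2)^\rho + (1/2)^\rho = 2$. The left-hand side is strictly convex in $\rho$ (a sum of two strictly convex exponentials) and equals $2$ at both $\rho=0$ and $\rho=1$, so by strict convexity these are the only roots. The hypothesis rules out $\rho=0$, and $\rho=1$ forces $\alpha=1/2$, precisely the ``neutral'' exception flagged in Remark~\ref{rem:neutral}. For every other $(\alpha,\rho)$ in the stated regime, the two test payoffs impose incompatible constraints on $p$, so no linear functional exists.

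To close, I would invoke the \citet{angeris2023geometry} characterization: the absence of a single supporting linear functional is exactly the failure of their no-arbitrage condition on the CFMM, completing the link to Dutch-booking. The main obstacle I anticipate is packaging the uniqueness of solutions to $(3/2)^\rho + (1/2)^\rho = 2$ crisply without overbuilt machinery — strict convexity plus two known roots is clean, but the write-up must state it compactly — together with a brief separate dispatch of the limiting aggregators $\rho \to \pm\infty$, which collapse to $\max$ or $\min$ and are likewise non-linear in $(x_1,x_2)$, so an analogous contradiction on $e_1, e_2, (1,1)$ handles them immediately.
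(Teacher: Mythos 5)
Your proposal is correct, and it starts from the same basic move as the paper --- evaluating the aggregator on the complementary payoffs $e_1=(1,0)$, $e_2=(0,1)$ and their sum $(1,1)$ --- but it goes further, and the extra step matters. The paper's proof stops after this single test and asserts that $U_{\alpha,\rho}(1,1)\neq 2\,U_{\alpha,\rho}(1,0)$ for all $0<\alpha<1$, $\rho\neq 0$. That assertion is false as stated: the test only forces $\alpha^{1/\rho}=\tfrac12$, i.e.\ $\alpha=2^{-\rho}$, and every pair on that one-parameter curve (e.g.\ $\rho=2$, $\alpha=\tfrac14$) passes the paper's test while still failing to be linear. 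Your second test payoff $(2,1)$, which under $\alpha=2^{-\rho}$ reduces the problem to $(3/2)^\rho+(1/2)^\rho=2$, together with the strict-convexity-plus-two-known-roots argument, correctly eliminates the whole curve except $\rho=1$, $\alpha=\tfrac12$ --- the neutral aggregator $\tfrac12 w+\tfrac12 b=\tfrac12(x_1+x_2)$, which genuinely \emph{is} linear and is acknowledged in Remark~\ref{rem:linear-corner-case} even though the proposition's ``except'' clause omits it. So your proof both completes the paper's argument and makes the exception set precise. Two small points to tidy in the write-up: (i) the evaluation $U_{\alpha,\rho}(1,0)=\alpha^{1/\rho}$ presumes $\rho>0$ (for $\rho<0$ the term $(1-\alpha)\,0^{\rho}$ diverges and $U(1,0)=0$, which kills linearity immediately, or equivalently your constraint $\alpha=2^{-\rho}>1$ already falls outside $(0,1)$ --- either remark disposes of $\rho<0$ in one line); (ii) the homogeneity-plus-additivity reduction is not strictly needed, since iterating the additivity hypothesis already gives $U(2,1)=2\,U(1,0)+U(0,1)$ directly.
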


\begin{proof}
Consider two complementary payoffs 
$X=(1,0)$ and $Y=(0,1)$. We have 
$U_{\alpha,\rho}(X) = U_{\alpha,\rho}(1,0)$ 
and 
$U_{\alpha,\rho}(X+Y) = U_{\alpha,\rho}(1,1)$. 
A linear functional would require
\[
  U_{\alpha,\rho}(1,1)
  \;=\;
  U_{\alpha,\rho}(1,0) \;+\; U_{\alpha,\rho}(0,1).
\]
Direct substitution into \eqref{eq:agg-rho-mini} shows that
$U_{\alpha,\rho}(1,1) \neq 2\,U_{\alpha,\rho}(1,0)$ 
unless $\rho\to\infty$ or $\alpha\in\{0,1\}$. Hence no single linear 
measure can represent $U_{\alpha,\rho}$ for general $0<\alpha<1$ 
and $\rho\neq0$ in a two-state setting.
\end{proof}

\begin{remark}[Neutral Users as a Special Case]
\label{rem:linear-corner-case}
A special exception arises in the two-token case if the user is linear in its bracket $[\underline{p},\overline{p}]$. This 
happens precisely at $\rho=1$ and $\alpha=\tfrac12$, such that \eqref{eq:agg-rho-mini} is evaluated:
\[
  U_{\,\tfrac12,1}(w,b)
  \;=\;
  \tfrac12\,w \;+\; \tfrac12\,b.
\]
In that corner case---\emph{and only if we consider two tokens}---the 
aggregator remains affine in $(w,b)$, and the guaranteed net-loss 
region collapses. We observed earlier (\ref{rem:neutral}) that such a user may have 
probability-like functional forms because they have a constant marginal rate of substitution between best and worst case outcomes. For these ``neutral'' users, veiling is not always 
required to avoid a sure loss. However,  
Section~\ref{sec:BoS-efficiency} proves that veiling enables them to achieve novel pareto dominant equilibria. We also note that once there are more than two possible payoffs (e.g.\ additional tokens or states), the value function for these ``neutral'' users will typically cease to be 
globally linear in all outcomes, and the sure‐loss phenomenon re‐emerges.
\end{remark}

\paragraph{Interpretation.}
Proposition~\ref{prop:no-linear-func} implies that, for nearly all 
$(\alpha,\rho)$, a CFMM that presupposes a single price can yield a guaranteed arbitrage (i.e.\ a Dutch-book) against them. This occurs precisely because no 
single linear functional can simultaneously represent all the 
user’s possible valuations in $[\text{BRL-S},\text{BRL-N}]$.

\subsection{Veiling to Avoid Sure-Loss}
\label{sec:veil}

We established in Section~\ref{sec:dbook} that agents of nearly every type $\theta \in \Theta$ cannot trade at a single linear price without risking a sure loss. We now proceed to show 
how veiling trades within a set of plausible outcomes enables them to avoid sure loss. 

\begin{proposition}[Veiling Blocks Sure Loss for Ambiguous Types]
\label{prop:no-sure-loss}
For any aggregator of the form $(\alpha,\rho)\in(0,1)\times\mathbb{R}
\setminus\{0\}$, or equivalently for any agent type $\theta\in\Theta$ 
whose aggregator lies in that class, there exists a \emph{veiled} 
strategy that avoids sure‐loss in a CFMM. Specifically, by representing 
each trade outcome as a \emph{set} of possible swaps in the bracket 
$[\mathrm{BRL\mbox{-}S}, \mathrm{BRL\mbox{-}N}]$, no single linear 
functional can force the agent’s aggregated payoff below zero in 
\emph{all} states.
\end{proposition}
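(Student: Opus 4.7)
The plan is to exploit the gap between the CFMM no-arbitrage characterization of \cite{angeris2023geometry} and Proposition~\ref{prop:no-linear-func}. The former says sure loss is avoided on a CFMM iff a single linear (price) functional supports the feasible trade set; the latter says no such functional represents the aggregator $U_{\alpha,\rho}$ whenever $(\alpha,\rho)\in(0,1)\times\mathbb{R}\setminus\{0\}$. The task is then to show that replacing a single-point trade with a set-valued veiled commitment makes the gap between these two facts non-binding, by shifting the user's exposure from per-state payoffs to an aggregated bracket valuation that the CFMM's decomposition cannot pick apart.

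First, I would formalize the veiled strategy. Let the price bracket be $B=[\mathrm{BRL\mbox{-}S},\mathrm{BRL\mbox{-}N}]$ and let $T(B)=\{\Delta\mathbf{R}(\lambda):\lambda\in B\}$ be the parameterized family of CFMM-feasible reserve changes, one per bracket value (each satisfying $\phi(\mathbf{R}+\Delta\mathbf{R}(\lambda))\ge k$). A veiled commitment executes \emph{some} $\Delta\mathbf{R}(\lambda)\in T(B)$, with $\lambda$ resolved by a Machine~II draw (Algorithm~\ref{alg:CauchyMachine}). Writing $\pi(\cdot)$ for the per-trade payoff, the user's ex ante valuation of the set is
\[
  U_{\alpha,\rho}(w_T,b_T),\qquad
  w_T\;=\;\inf_{\lambda\in B}\pi\bigl(\Delta\mathbf{R}(\lambda)\bigr),\quad
  b_T\;=\;\sup_{\lambda\in B}\pi\bigl(\Delta\mathbf{R}(\lambda)\bigr),
\]
so the aggregator ``sees'' the bracket as a whole rather than any particular realization inside it.

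Second, I would suppose for contradiction that an adversary forces sure loss, i.e.\ that there is a single linear price functional $p(\cdot)$ against which the user's aggregated payoff satisfies $U_{\alpha,\rho}<0$ in every state. By the Angeris--Chitra--Evans characterization any such $p$ must separate $T(B)$ from the origin; but by Proposition~\ref{prop:no-linear-func} no such $p$ can represent $U_{\alpha,\rho}$ on the two-endpoint structure $(w_T,b_T)$ once $\alpha\in(0,1)$ and $\rho\ne0$. Concretely, driving both $p(w_T)<0$ and $p(b_T)<0$ would, together with strict concavity/convexity of $U_{\alpha,\rho}$ on the bracket, collapse $w_T=b_T$, contradicting the nontriviality of the BRL-S/BRL-N bracket. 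Third, I would close off sequential workarounds by invoking Machine~II directly: even if the adversary tries to assemble a Smith-style complementary-bet construction across many veiled trades, Proposition~\ref{prop:M2_reasonable} guarantees that for every candidate price sequence the non-ergodic frequencies of Machine~II admit a justifying subsequence under which the user's veiled action remains reasonable, so the per-period no-arbitrage conclusion extends to all finite bundles.

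The principal obstacle is the second step. The CFMM no-arbitrage theorem is phrased for trades decomposable across states with a single pricing functional, whereas the user's valuation after veiling is phrased over a \emph{set} of trades aggregated by $U_{\alpha,\rho}$. Bridging these requires showing precisely that a veiled commitment satisfies \emph{consistency under aggregation} (its outcomes lie within $[w_T,b_T]$) but not \emph{consistency under decomposition} (the aggregator cannot be rewritten as an expectation against any single probability measure). Making this distinction rigorous — so that a supporting linear functional for the CFMM's feasible set simply cannot be pulled back through the aggregator into a separating hyperplane on the bracket — is what carries Proposition~\ref{prop:no-linear-func}'s non-existence of a linear representative into the desired non-existence of sure loss.
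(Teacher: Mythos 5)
Your route diverges from the paper's, and the divergence exposes a real gap. The paper's proof is a direct geometric argument: it models each veiled trade as a convex set $X^{(i)}\subseteq\mathbb{R}^n$, \emph{assumes} each such set contains at least one coordinatewise nonnegative point, and then observes that the Minkowski sum $\alpha_1 X^{(1)}\oplus\cdots\oplus\alpha_k X^{(k)}$ therefore also contains a nonnegative point (just sum the nonnegative representatives), so no linear functional can be strictly negative on every point of the aggregate --- hence no chain of veiled trades is a guaranteed loss in every sub-price scenario. The entire argument rests on that containment hypothesis and on closure of nonnegativity under Minkowski addition; it never needs Proposition~\ref{prop:no-linear-func} at all. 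You instead try to derive the conclusion \emph{from} Proposition~\ref{prop:no-linear-func}, i.e.\ from the non-existence of a linear functional representing $U_{\alpha,\rho}$. That is a different assertion from the one you need: non-representability says the aggregator is not additive over payoffs, while ``no sure loss'' says the set of achievable aggregate payoffs is not contained in the strictly negative orthant. An adversary does not need a linear functional that \emph{represents} the user's preferences in order to engineer a bundle of trades whose realized payoff is negative in every state; it only needs the achievable payoff set to sit entirely below zero. Your step two does not rule this out, and the specific claim that forcing $p(w_T)<0$ and $p(b_T)<0$ would ``collapse $w_T=b_T$'' does not follow --- a linear functional is perfectly capable of being negative at two distinct points.

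To your credit, your closing paragraph identifies exactly this bridge (pulling non-representability back into a statement about separating hyperplanes on the bracket) as the principal obstacle, but the proposal does not actually cross it. The missing ingredient is the paper's hypothesis that each veiled trade set contains a coordinatewise nonnegative outcome --- that is what converts ``set-valued commitment'' into ``unseparable from the nonnegative orthant.'' Without some such condition the proposition is false as a matter of geometry: if every point of every $X^{(i)}$ were strictly negative in all coordinates, the Minkowski sum would be too, veiling or no veiling. Your third step (invoking Machine~II and Proposition~\ref{prop:M2_reasonable} to block sequential constructions) is likewise not load-bearing in the paper's argument, which handles multi-trade bundles purely through closure of the Minkowski sum under aggregation.
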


\begin{proof}
Consider the set‐valued payoff representation, where the $i$th 
trade on the CFMM from reserve $\mathbf{R}$ to $\mathbf{R}'$ is modeled 
by a convex set $X^{(i)} \subseteq \mathbb{R}^n$. If each $X^{(i)}$ 
includes at least one coordinatewise nonnegative point, then the 
Minkowski sum of scaled sets
\[
  \alpha_1\,X^{(1)} \;\oplus\;\cdots\;\oplus\;\alpha_k\,X^{(k)}
\]
must contain a coordinatewise nonnegative vector in $\mathbb{R}^n$. 
Hence \emph{no} linear functional can assign strictly negative values 
to \emph{all} points in that sum; the user thus cannot be forced into 
a sure negative payoff. Concretely, every sub‐payoff in 
$[\mathrm{BRL\mbox{-}S}, \mathrm{BRL\mbox{-}N}]$ remains feasible 
without collapsing to a single price. The procedure is also 
\emph{closed under aggregation}: combining multiple veiled trades 
yields a larger set that still contains a nonnegative point. 
Thus there is no “chain” of trades that yields guaranteed loss in 
\emph{every} possible sub‐price scenario, blocking any Dutch-book 
construction.
\end{proof}

\paragraph{Maintaining the Reserve Requirement.}
A natural question is whether the CFMM’s invariant 
$\phi(\mathbf{R}) \ge k$ still holds for these set‐valued trades. 
According to \cite{angeris2023geometry}, a trade from 
$(R_1,R_2)$ to $(R_1+\Delta x, R_2-\Delta y)$ is valid only if 
$\phi(R_1+\Delta x,\, R_2-\Delta y) \ge k$. Under a veiled swap, 
the user’s final $(\Delta x,\Delta y)$ belongs to a 
\emph{set} of possible outcomes, each reflecting a different price 
in $[\mathrm{BRL\mbox{-}S},\mathrm{BRL\mbox{-}N}]$. If 
\emph{every} point in that set is individually feasible, then 
the entire set of trades complies with $\phi(\mathbf{R'}) \ge k$. 
Hence ambiguity in the price dimension does not violate the 
CFMM’s reserve constraint; it merely unifies multiple feasible 
(single‐price) trades into one set‐valued transaction.

\begin{corollary}[No Sure-Loss for All Ambiguous Agent Types]
\label{cor:no-sure-loss}
Let $\Theta$ be the collection of agent types whose parameters 
$(\alpha_\theta,\rho_\theta)$ satisfy $0<\alpha_\theta<1$ and 
$\rho_\theta\neq0$. Then, by Proposition~\ref{prop:no-sure-loss}, 
each such $\theta$ can adopt a veiled strategy to avoid Dutch-book 
exploitation on a CFMM. In other words, \emph{for all} $\theta$ 
in that broad subset of $\Theta$, there is no sequence of trades 
that forces a strictly negative payoff in every scenario.
\end{corollary}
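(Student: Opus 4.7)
The plan is to show this corollary is essentially a type-indexed restatement of Proposition~\ref{prop:no-sure-loss}, combined with an induction (or closure) argument to extend the single-trade guarantee into a guarantee over arbitrary sequences of trades. First I would fix an arbitrary $\theta \in \Theta$ with $(\alpha_\theta,\rho_\theta)\in(0,1)\times\mathbb{R}\setminus\{0\}$ and apply Proposition~\ref{prop:no-sure-loss} to obtain, for this specific type, a veiled strategy whose set-valued payoff representation $X^{(i)}\subseteq\mathbb{R}^n$ contains a coordinatewise nonnegative point. Since $\theta$ was arbitrary in that parameter range, the existence assertion is automatically uniform over $\Theta$; the strategy itself may depend on $(\alpha_\theta,\rho_\theta)$, but its no-sure-loss property does not.

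Next I would strengthen the conclusion from ``a veiled trade'' to ``any finite sequence of veiled trades.'' The key observation is that the closure-under-aggregation property established in the proof of Proposition~\ref{prop:no-sure-loss} is precisely what licenses this step: if $X^{(1)},\dots,X^{(k)}$ each contain a coordinatewise nonnegative point, then so does their (positively scaled) Minkowski sum
\[
  \alpha_1\,X^{(1)} \oplus \cdots \oplus \alpha_k\,X^{(k)},
\]
for any $\alpha_i>0$. Hence no linear functional separates the aggregated set from the nonnegative orthant, which is exactly the geometric characterization of no-arbitrage used in \cite{angeris2023geometry}. I would then combine this with the reserve-feasibility remark so that each intermediate CFMM state continues to satisfy $\phi(\mathbf{R})\ge k$, ensuring the sequence is executable on-chain and not merely abstractly feasible.

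Finally I would state the contrapositive conclusion explicitly: if some sequence of trades forced a strictly negative payoff in \emph{every} sub-price scenario in $[\text{BRL-S},\text{BRL-N}]$, then a single linear functional would witness that negativity on the aggregated Minkowski sum, contradicting the nonnegative-point property above. Therefore no such sequence exists for any $\theta$ in the given subset of $\Theta$, yielding the corollary.

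The main obstacle is not really a mathematical one but a scoping one: Proposition~\ref{prop:no-sure-loss} is stated for a single trade outcome, and one must be careful that ``sequence of trades'' in the corollary inherits the same no-sure-loss property. The closure-under-aggregation clause in the proposition is doing this work, so the proof reduces to citing that clause cleanly rather than re-deriving it. A secondary subtlety is ensuring that the mixing weights $\alpha_i$ arising from aggregation are compatible with the agent's aggregator $U_{\alpha_\theta,\rho_\theta}$; since the no-arbitrage criterion of \cite{angeris2023geometry} concerns the \emph{geometry} of feasible trades rather than the aggregator per se, this compatibility is immediate once one observes that veiling decouples the geometric feasibility check from the particular $(\alpha_\theta,\rho_\theta)$.
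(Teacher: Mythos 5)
Your proposal is correct and matches the paper's treatment: the paper gives no separate proof of Corollary~\ref{cor:no-sure-loss}, deriving it directly from Proposition~\ref{prop:no-sure-loss}, whose closure-under-aggregation clause (the scaled Minkowski sum retaining a coordinatewise nonnegative point) is exactly the mechanism you invoke to pass from a single veiled trade to an arbitrary finite sequence. Your added remarks on reserve feasibility and the contrapositive merely make explicit what the proposition's proof and the surrounding discussion already contain.
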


\paragraph{Interpretation.}
Propositions~\ref{prop:no-linear-func} and \ref{prop:no-sure-loss} 
together show that although ambiguous users cannot be represented by 
a single linear measure (and thus can face sure loss if they commit 
to a single‐price mechanism), switching to \emph{veiled} (set‐valued) 
trades eliminates this vulnerability. By refusing to specify a single 
point price in $[\mathrm{BRL\mbox{-}S},\mathrm{BRL\mbox{-}N}]$ at 
each step, no linear functional can assign a negative value to all 
outcomes in the Minkowski sum. Hence, from the perspective of these 
ambiguous agent types, the CFMM now admits safe, sure‐loss‐free 
liquidity provision. In Section~\ref{sec:BoS-efficiency} we further 
argue that even the “neutral” corner case ($\alpha=\tfrac12,\rho=1$) 
may \emph{strictly gain} from veiling, thus demonstrating an 
efficiency improvement.

\subsection{Veiling is Pareto-Optimal for Neutral Users}
\label{sec:BoS-efficiency}
We now treat the special case of ``ambiguity neutral'' users described above, following \citet{binmore2016minimal} in showing how veiling enables equilibria that pareto dominate the mixed-strategy Nash in the BRL-N, BRL-S case. This can be interpreted as trade using BRL-VEIL vs. trade using BRL-AVG, implying that the former can be more efficient for these neutral users.

Recall that these users have values \(\rho=1\) and \(\alpha=\tfrac12\), and thus their aggregator is:
\[
  U_{\,\tfrac12,1}(w,b)
  ~=~
  \tfrac12\,w \;+\; \tfrac12\,b.
\] 

\begin{proposition}[Pareto-Improving Trade under Veiling]
\label{prop:pareto-veil}
As described above, consider that the currency split between BRL-S and BRL-N introduces an asymmetric coordination game where different subsets of users favor one value or the other, but also face high costs from miscoordination (e.g. forking costs). This game has payoffs:
\[
\pi_1(p,q) \;=\; p\,(1-q) \;+\; \lambda\,\bigl(1-p\bigr)\,q,
\]
with $0<\lambda<1$. We normalize the ``preferred'' payoff at 1 (where e.g. BRL-S (BRL-N) favoring users get to trade their token at the value BRL-S (BRL-N.) And thus $\lambda$ reflects the proportional disadvantage for e.g. a BRL-S user forced to trade at a BRL-N price.\footnote{Recall that the setting is one in which the protocol cannot break the asymmetry of the game by choosing one value or another. Also note that a \emph{preference} among users to resolve the value of BRL either direction (BRL-S or BRL-N) can exist even if their ``ambiguity attitude'' remains neutral, \(\rho=1\) and \(\alpha=\tfrac12\).}

The mixed-strategy Nash Equilibrium occurs at 
\[
p \;=\; q \;=\; e \;=\; \frac{1}{1+\lambda},
\]
where each player's (e.g. BRL-S and BRL-N types) have payoff:
\[
c \;=\; \pi_1(e,e) \;=\; \frac{\lambda}{1+\lambda}.
\]
Now let each player evaluate payoffs from veiled strategies (now on probability \emph{intervals} A,B) via their aggregator:
\[
U_{\tfrac12,1}(A,B) 
\;=\; 
\frac12\,\min_{(p,q)\in A\times B}\!\pi_1(p,q) 
\;+\; 
\frac12\,\max_{(p,q)\in A\times B}\!\pi_1(p,q).
\]
Then there exists a symmetric interval $[A,B]=[\,e - a^*,\,e + a^*]$
for Row and the same for Column that constitutes a setwise best response
for each, and yields a strictly higher aggregator payoff than the
single-point mixture $p=e$.  Hence the Veiled equilibrium 
$\bigl([\,e - a^*,e + a^*],[\,e - a^*,e + a^*]\bigr)$ 
\emph{Pareto-dominates} the standard point-mix $(e,e)$.
\end{proposition}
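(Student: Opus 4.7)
My plan is to exploit the bilinearity of $\pi_1(p,q)=p(1-q)+\lambda(1-p)q$ on the Cartesian box $[e-a,e+a]\times[e-a,e+a]$, reducing the set-valued optimization to a one-variable problem in the interval half-width $a$. Since $\pi_1$ is affine in each argument separately, $\min\pi_1$ and $\max\pi_1$ over the box are attained at corners, so $U_{\tfrac12,1}(A,B)=\tfrac12(\min+\max)$ collapses to comparing four corner values. Expanding about the mixed Nash and using the identity $(1+\lambda)e=1$, I obtain the compact form $\pi_1(e+\delta_p,e+\delta_q)=c+(\lambda-1)\delta_q-(1+\lambda)\delta_p\delta_q$, so substituting $\delta_p,\delta_q\in\{-a,+a\}$ reads off the four corner payoffs directly.

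Next I will identify which corner maximizes and which minimizes $\pi_1$ over the box as a function of $a$ and $\lambda$, assemble $U_R(a)=\tfrac12(\min+\max)$ explicitly, and optimize over the feasible range $a\in[0,\min(e,1-e)]$ so the interval stays within $[0,1]$. The first-order condition on $a^*$ balances the marginal gain from widening the box (a larger max corner) against the marginal loss (a smaller min corner). By the game's player-swap symmetry $\pi_2(p,q)=\pi_1(q,p)$, the same computation yields Column's aggregator $U_C(a)$ with optimum also at $a^*$, so establishing $U_R(a^*)=U_C(a^*)>c$ delivers the Pareto-dominance claim over the point mixture $(e,e)$.

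The main obstacle will be the \emph{setwise} best-response verification: a priori each player's strategy space is every subset of $[0,1]$, not just symmetric intervals around $e$. To restrict to convex sets I will invoke the Minkowski-sum reasoning already used in Proposition~\ref{prop:no-sure-loss}, since only the convex hull of a strategy set affects the aggregator's extremes. To pin the center at $e$ and the width at $a^*$, I will argue from the game's symmetry plus concavity of $a\mapsto U_R(a)$: any asymmetric perturbation $[e-a^*-\epsilon,\,e+a^*+\eta]$ cannot raise $U_R$ because, at a symmetric optimum, the first-order sensitivities to $\epsilon$ and $\eta$ must vanish. Once $a^*$ is pinned down by that first-order condition, the Pareto comparison with $(e,e)$ follows immediately, though checking that the candidate $a^*$ actually delivers a strict improvement (rather than only a tie with $c$ driven by the bilinear cancellation between diagonally opposite corners) is the delicate step that requires boundary or higher-order considerations.
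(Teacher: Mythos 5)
Your plan follows the same route as the paper's proof (bilinearity, corner analysis on the box $[e-a,e+a]^2$, a boundary/first-order check for the setwise best response, and a symmetry argument for Column), and your expansion $\pi_1(e+\delta_p,e+\delta_q)=c+(\lambda-1)\delta_q-(1+\lambda)\delta_p\delta_q$ is correct. But the step you defer — ``identify which corner maximizes and which minimizes'' and then ``check that $a^*$ actually delivers a strict improvement'' — is exactly where the argument breaks, and your proposal does not close it. Substituting $\delta_p,\delta_q\in\{-a,+a\}$ into your own expansion gives the four corner values $c\pm(1-\lambda)a\pm(1+\lambda)a^2$; since $0<\lambda<1$, the maximum over the box is at $(e+a,e-a)$, with value $c+(1-\lambda)a+(1+\lambda)a^2$, but the minimum is at $(e+a,e+a)$, with value $c-(1-\lambda)a-(1+\lambda)a^2$ — \emph{not} at $(e-a,e+a)$ as the paper's appendix asserts. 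Averaging the true max and min, every $a$-dependent term cancels and $U_{\frac12,1}\bigl([e-a,e+a],[e-a,e+a]\bigr)=c$ identically in $a$. So under the aggregator as literally defined in the proposition (min and max over the full product $A\times B$), there is no strict improvement over the point mix, and no interior first-order condition exists to pin down $a^*$: the function you propose to optimize is constant.

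The paper obtains $c+(1+\lambda)a^2>c$ only by taking the minimum over the two anti-diagonal corners $(e\mp a,e\pm a)$, which amounts to implicitly restricting attention to perfectly negatively correlated joint realizations of the two intervals rather than the whole box. Nothing in your proposal (nor in the proposition's stated definition of $U_{\frac12,1}(A,B)$) licenses that restriction, and your appeal to the Minkowski-sum reasoning of Proposition~\ref{prop:no-sure-loss} only justifies passing to convex hulls, which makes the box larger, not smaller. So the ``delicate step'' you flag at the end is not a technicality to be patched with boundary or higher-order considerations: with the aggregator as written, the honest corner analysis yields a tie with $c$, and any proof of the strict Pareto-dominance claim must first replace or reinterpret the aggregator (e.g., via Binmore's cell-by-cell upper/lower probability construction) before the corner computation can produce the advertised $(1+\lambda)a^2$ gain.
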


\begin{proof}[Proof Sketch]
Because $\pi_1(p,q)$ is bilinear, on any rectangle $[\underline p,\overline p]\times[\underline q,\overline q]$ both the minimum and the maximum occur at corners.  Consequently, for a symmetric interval $[e-a,e+a]$, Row’s aggregated payoff is 
\[
U_{\tfrac12,1}\bigl([e-a,e+a],[e-a,e+a]\bigr)
\;=\;
\tfrac12\,\min_{\text{corners}}\pi_1 \;+\; \tfrac12\,\max_{\text{corners}}\pi_1.
\]
A corner analysis (detailed in Appendix~\ref{app:VeilingDetails}) shows that,
\[
U_{\tfrac12,1}\bigl([e-a,e+a],[e-a,e+a]\bigr)
\;=\;
c \;+\; (1+\lambda)\,a^2
\;>\;
c.
\]
Since each player’s payoff increases by the same amount, the outcome 
$\bigl([\,e-a,e+a],[\,e-a,e+a]\bigr)$ strictly Pareto-improves upon $(e,e)$.  
Evaluating the setwise best responses to ensure no profitable expansion/contraction of the interval (see Appendix~\ref{app:VeilingDetails}), this pins down a specific $a^*$.  By symmetry, Column’s best response is likewise the same interval, completing the argument.  
\end{proof}

The above proof follows Ken Binmore's canonical example of efficiency gains under ambiguity.\citep{binmore2016minimal} We observed that the BRL-N vs. BRL-S case was an example of an asymmetric coordination game-- subsets of the protocol's users disagreed and preferred to get their way, but also faced sharp costs due to miscoordination (e.g. forking the chain.) Binmore finds a symmetric ambiguous equilibrium that pareto dominates the mixed strategy nash equilibrium in just such a case. In our example, this ambiguous equilibrium corresponds to trade at a Veiled price over the interval [BRL-S,BRL-N], and the mixed strategy equilibrium corresponds to a trade at BRL-AVG. Note also that a best response in one of these symmetric equilibria can be a mixed or ambiguous strategy. This corresponds to the idea that veiled users can trade with ambiguous users and non-ambiguous (``BRL-AVG'') alike. The striking result implies Veiling can increase efficient trade while allowing users to avoid sure-loss scenarios. The result is depicted in \ref{fig:bos-sub}.

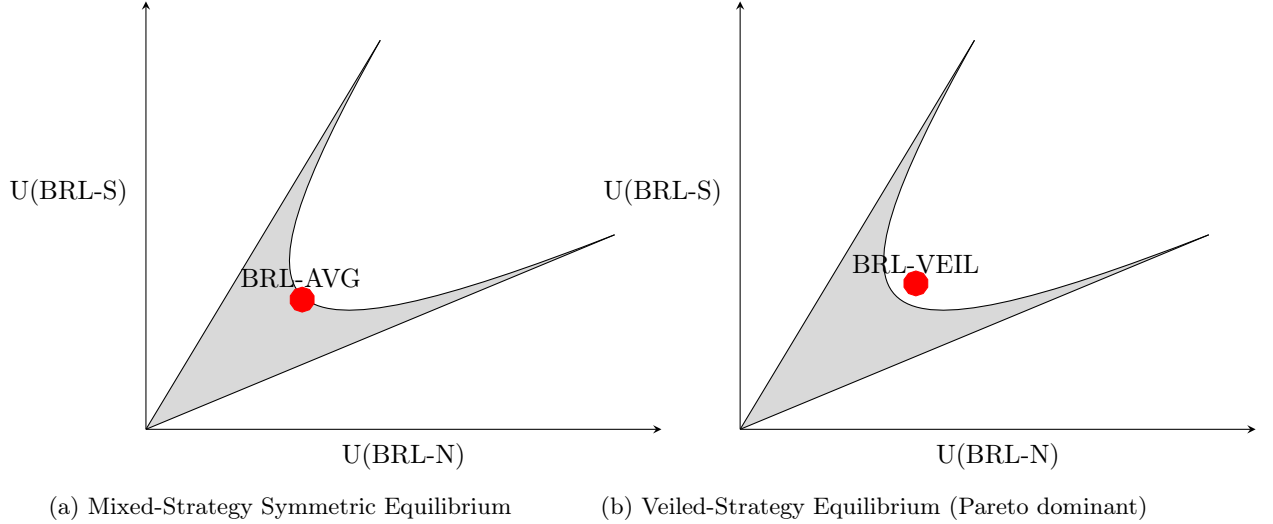
\begin{figure}[ht]
\centering

\begin{subfigure}[b]{0.45\textwidth}
\centering
\begin{tikzpicture}
\begin{axis}[
  axis lines=middle,
  xmin=0, xmax=2.2,
  ymin=0, ymax=2.2,
  xtick=\empty,
  ytick=\empty,
  xlabel={U(BRL-N)},
  ylabel={U(BRL-S)},
  xlabel style={
    at={(axis description cs:0.5, -0.01)},
    anchor=north
  },
  ylabel style={
    at={(axis description cs:-0.15, .5)},
    anchor=south
  },
]

\fill[gray!30]
  (axis cs:0,0) -- (axis cs:1,2)
  .. controls (axis cs:0.3,0.5) and (axis cs:0.5,0.3) ..
  (axis cs:2,1) -- cycle;

\draw (axis cs:0,0) -- (axis cs:1,2);
\draw (axis cs:0,0) -- (axis cs:2,1);
\draw
  (axis cs:1,2)
  .. controls (axis cs:0.3,0.5) and (axis cs:0.5,0.3) ..
  (axis cs:2,1);

\draw[red,fill=red] (axis cs:2/3,2/3)
  node[star,star points=5,star point ratio=1,draw=red,fill=red]{};
\node at (axis cs:0.66, 0.78) {BRL-AVG};

\end{axis}
\end{tikzpicture}
\caption{Mixed-Strategy Symmetric Equilibrium}
\end{subfigure}
\quad
\begin{subfigure}[b]{0.45\textwidth}
\centering
\begin{tikzpicture}
\begin{axis}[
  axis lines=middle,
  xmin=0, xmax=2.2,
  ymin=0, ymax=2.2,
  xtick=\empty,
  ytick=\empty,
  xlabel={U(BRL-N)},
  ylabel={U(BRL-S)},
  xlabel style={
    at={(axis description cs:0.5, -0.01)},
    anchor=north
  },
  ylabel style={
    at={(axis description cs:-0.15, .5)},
    anchor=south
  },
]

\fill[gray!30]
  (axis cs:0,0) -- (axis cs:1,2)
  .. controls (axis cs:0.3,0.5) and (axis cs:0.5,0.3) ..
  (axis cs:2,1) -- cycle;
\draw (axis cs:0,0) -- (axis cs:1,2);
\draw (axis cs:0,0) -- (axis cs:2,1);
\draw
  (axis cs:1,2)
  .. controls (axis cs:0.3,0.5) and (axis cs:0.5,0.3) ..
  (axis cs:2,1);

\draw[red,fill=red] (axis cs:0.75,0.75)
  node[star,star points=5,star point ratio=1,draw=red,fill=red]{};
\node at (axis cs:0.75, 0.85) {BRL-VEIL};

\end{axis}
\end{tikzpicture}
\caption{Veiled-Strategy Equilibrium (Pareto dominant)}
\end{subfigure}

\caption{Comparison of mixed vs.\ veiled strategies in payoff space. The red dots correspond to equilibrium payoffs in the classic finding that a veiled strategy equilibrium Pareto dominates the classical mixed strategy Nash.}
\label{fig:bos-sub}
\end{figure}

\subsection{Welfare Improvement from Veiling}
Veiling has been shown to be fair and efficient for ambiguous users in a CFMM setting, helping them avoid sure-loss scenarios and provide novel pareto improving equilibria. However, the plausible applications of veiling extend well-beyond the oracle-CFMM setting. Elsewhere we consider the role of veiling in onchain applications and find cases where efficiency gains can be found including, liquidations, slippage, transaction ordering, data routing, and even MEV auctions comprise use-cases where veiling may be found to be efficiency enhancing. Here it will suffice to remain with the CFMM case and provide a brief summary of the welfare-improvements from veiling. This applies to the case described above where users have divergent opinions about the resolution of an ambiguous value, but the protocol and/or applications must ultimately resolve to single execution price to remain coherent.

\begin{corollary}[Welfare Improvement for the Entire Type Space $\Theta$]
\label{cor:welfare}
Combining 
Proposition~\ref{prop:no-sure-loss} 
and 
Proposition~\ref{prop:pareto-veil} 
shows that every $\theta \in \Theta$ 
achieve \emph{strictly} higher expected utility under veiling, while the special case
of neutral \(\rho=1\) and \(\alpha=\tfrac12\) users are provided a new pareto dominant equilibrium. Because each $\theta$-agent’s payoff from the veiled design 
is $\ge$ the payoff from the standard design, in many cases strictly so, 
the ex‐ante expected utility $\int_{\Theta}U_\theta\,d\nu(\theta)$ 
is also $\ge$. Hence the ex-ante welfare is (weakly) greater than under the single-probability approach, demonstrating a Pareto improvement for all agent types in $\Theta$.
\end{corollary}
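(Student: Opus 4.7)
The plan is to decompose the type space $\Theta$ into three pieces and handle each with a different earlier result, then recombine via the measure $\nu$. Write $\Theta = \Theta_1 \cup \Theta_2 \cup \Theta_0$, where $\Theta_1 = \{\theta : 0<\alpha_\theta<1,\ \rho_\theta \neq 0,\ (\alpha_\theta,\rho_\theta) \neq (\tfrac12,1)\}$ collects the strictly ambiguous types addressed by Proposition~\ref{prop:no-sure-loss}, $\Theta_2 = \{(\tfrac12,1)\}$ is the neutral corner of Remark~\ref{rem:linear-corner-case}, and $\Theta_0$ denotes the trivial edge cases $\alpha_\theta \in \{0,1\}$ (handled by inspection, since such agents effectively do not face ambiguity within the bracket and so are indifferent between the two designs).

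On $\Theta_1$, I would compare the two designs payoff by payoff. Under the single-price CFMM baseline, Proposition~\ref{prop:no-linear-func} rules out any linear functional representing $U_{\alpha_\theta,\rho_\theta}$, so the Dutch-book construction sketched there exhibits a trade sequence whose aggregated value is strictly negative for the agent. Under the veiled design, Proposition~\ref{prop:no-sure-loss} guarantees at least one coordinatewise nonnegative point in the Minkowski sum of the set-valued trades, and monotonicity of $U_{\alpha,\rho}$ in $(w,b)$ then forces the veiled aggregator to be at least zero. Subtracting gives a strict gain $\Delta(\theta) > 0$ on all of $\Theta_1$.

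On $\Theta_2$, Proposition~\ref{prop:pareto-veil} applies directly: the symmetric veiled equilibrium delivers $c + (1+\lambda)(a^*)^2$, strictly exceeding the mixed-strategy Nash payoff $c$ by $(1+\lambda)(a^*)^2 > 0$. Combining the two cases, set $\Delta(\theta) := U_\theta^{\mathrm{veil}} - U_\theta^{\mathrm{std}} \ge 0$ pointwise, with $\Delta(\theta) > 0$ on $\Theta_1 \cup \Theta_2$. Under the non-degeneracy assumption on the type distribution stated in Section~\ref{sec:agenttype}, the set $\Theta_1 \cup \Theta_2$ carries positive $\nu$-measure, so integrating yields
\[
  \int_\Theta U_\theta^{\mathrm{veil}} \, d\nu(\theta)
  \;\ge\; \int_\Theta U_\theta^{\mathrm{std}} \, d\nu(\theta),
\]
with strict inequality whenever $\nu(\Theta_1 \cup \Theta_2) > 0$, which is exactly the claimed ex-ante welfare improvement.

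The main obstacle will be the pair of quiet technicalities needed to promote the earlier qualitative results into a quantitative, measure-theoretic statement. First, ``avoids sure loss'' must be translated into a concrete numerical lower bound on the veiled aggregator and a matching upper bound on the standard one; this requires an explicit monotonicity argument for $U_{\alpha,\rho}$, with extra care on the branch $\rho<0$ where the aggregator behaves like a harmonic-type mean and sign conventions matter. Second, $\Delta(\theta)$ must be $\nu$-measurable for the integral to be well-defined; this should follow from the joint continuity of $U_{\alpha,\rho}(w,b)$ in its parameters together with continuity of the best-response maps used in Proposition~\ref{prop:pareto-veil}, but it should be flagged for readers who take $\Theta$ to be merely ``measurable'' rather than a Borel subset of $(0,1)\times\mathbb{R}$.
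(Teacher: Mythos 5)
Your proposal follows essentially the same route as the paper, which offers no separate proof beyond the corollary's own statement: apply Proposition~\ref{prop:no-sure-loss} to the ambiguous types, Proposition~\ref{prop:pareto-veil} to the neutral corner $(\alpha,\rho)=(\tfrac12,1)$, and integrate the pointwise (weak) gain against $\nu$. Your added care about the measurability of $\Delta(\theta)$, the non-degeneracy of $\nu$, and the translation of ``avoids sure loss'' into a numerical lower bound only makes explicit what the paper leaves implicit.
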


\section{Conclusion}
\label{sec:conclusion}
In summary, we propose that state machines may benefit from enabling non‐measurable ambiguity in (some of) their interactions. The machine can implement veiling, and thereby avoid systematically collapsing set-valued information into single‐probabilities, avoiding the familiar Dutch‐book pitfalls. We demonstrated how such \emph{veiling} can be implemented dynamically and how it may yield fairness and efficiency gains in a blockchain context, with the potential to enhance fairness and efficiency. While much remains to be explored about the practical
implications of incorporating ambiguity on‐chain, our results demonstrate a method by which state machines can preserve liveness even under adversarial conditions, while perhaps also facilitating robust and more inclusive forms of strategic interaction.

\section*{Appendix}
\section{Alternative Functionals for Ambiguous BRL}
\label{app:fairness-functions}
Applications could also try to create other derivative assets that match user preferences, for instance consider the aggregator function from above:

\begin{equation}
   U(w,b)
   \;=\;
   \bigl[\alpha\,w^\rho \;+\; (1-\alpha)\,b^\rho\bigr]^{\tfrac{1}{\rho}}.
\label{eq:OHEU2}
\end{equation}

Where $w$ and $b$ would now denote the two prices, and $\alpha$ and $\rho$ are again parameters that determine the agent's elasticity of substitution between them. But now interpret parameters as reflecting a "fairness attitude":
\begin{itemize}
    \item With $\rho=1$, $0 \le \alpha \le 1$, and $w$ fixed to be one of the two currencies, this provides a linear weighting in the prices. The higher it is, the more the currency $w$ is favored as the true value. 
    \item With $\rho=1$, $\alpha=1$ and $w$ always the lower of the two prices, this always treats the lowest price as the true price. This can be seen as favoring the worst off group.
    \item At $\rho = 0$, we have $w^\alpha b^{1-\alpha}$. This can be seen as a smoothly increasing preference for $w$ when the inequality is extreme. 
\end{itemize}

\section{Additional Details on Veiling Equilibria}
\label{app:VeilingDetails}

We provide here the algebraic details and the best-response verifications 
omitted in the main text. 

\subsection{Corner Payoff Computations}

Recall that Row's payoff is
\[
\pi_1(p,q) 
\;=\; 
p\,\bigl(1-q\bigr)
\;+\;
\lambda\,\bigl(1-p\bigr)\,q
\quad
\text{and}
\quad
e \;=\; \frac{1}{1+\lambda}.
\]
If Row chooses a set $[\,e-a,\,e+a]$ and Column does likewise, the aggregator payoff is 
\[
U_{\tfrac12,1}\bigl([e-a,e+a],[e-a,e+a]\bigr)
\;=\;
\tfrac12\,\min_{p,q\in[e-a,e+a]} \pi_1(p,q)
\;+\;
\tfrac12\,\max_{p,q\in[e-a,e+a]} \pi_1(p,q).
\]
By bilinearity, these extrema occur at corners of the square 
$[\,e-a,e+a]\times[e-a,e+a]$.  A short inspection shows:
\[
\text{(i) Maximum corner: }(p,q) \;=\;(e+a,\,e-a),
\quad
\text{(ii) Minimum corner: }(p,q) \;=\;(e-a,\,e+a),
\]
provided $a$ is small enough that $e-a \ge 0$ and $e+a \le 1$.  
Hence
\[
U_{\tfrac12,1}(A,A)
\;=\;
\tfrac12\,\pi_1(e-a,\,e+a) \;+\; \tfrac12\,\pi_1(e+a,\,e-a).
\]

A direct expansion yields
\begin{align*}
\pi_1(e-a,\,e+a)
&=\;
(e-a)\bigl[1-(e+a)\bigr] \;+\;
\lambda\,\bigl[1-(e-a)\bigr]\,(e+a),
\\
\pi_1(e+a,\,e-a)
&=\;
(e+a)\bigl[1-(e-a)\bigr] \;+\;
\lambda\,\bigl[1-(e+a)\bigr]\,(e-a).
\end{align*}
If we denote these by $f(a)$ and $g(a)$ respectively, then one can check
\[
f(a) + g(a)
\;=\;
2\,\Bigl[\, (1+\lambda)\,e(1-e) \;+\; (1+\lambda)\,a^2 \Bigr].
\]
Recalling that $e(1-e)\,(1+\lambda) = \lambda/(1+\lambda) = c$, we get
\[
\tfrac12\,\bigl[f(a) + g(a)\bigr]
\;=\;
c + (1+\lambda)\,a^2,
\]
hence
\[
U_{\tfrac12,1}\bigl([e-a,e+a],[e-a,e+a]\bigr)
\;=\;
c + (1+\lambda)\,a^2
\;>\;
c,
\]
proving that these intervals strictly improve on the single‐point mix 
$p=e$ from the viewpoint of the aggregator.

\subsection{No Profitable Expansion or Contraction (Boundary Conditions)}

To confirm $[\,e-a,e+a]$ is indeed a \emph{setwise best response}, we must check:

\begin{itemize}
\item
\emph{No profitable contraction:}
Removing a small neighborhood around either boundary, $e-a$ or $e+a$, 
does not strictly raise the aggregator payoff. 
Concretely, 
\[
U_{\tfrac12,1}\bigl([e-a+\delta,e+a],[e-a,e+a]\bigr)
\;\;=\;\;
U_{\tfrac12,1}\bigl([e-a,e+a],[e-a,e+a]\bigr),
\]
for small $\delta>0$, and similarly for $e+a-\delta$.  

\item
\emph{No profitable expansion:}
Including a new probability mass $p<e-a$ or $p>e+a$ in Row’s set does not 
increase the payoff.  Again this is reflected in corner payoffs not 
improving when we push the boundary beyond $e-a$ or $e+a$.
\end{itemize}

Since $\pi_1$ is bilinear, these conditions simplify to linear boundary 
constraints.  One solves them (or checks they are non-strict) to find that 
some nonzero $a^*$ satisfies them exactly, thus pinning down a 
\emph{setwise best response} interval.

\subsection{Opponent’s Best Response and Symmetry}

The game is “asymmetric” only in the sense that Row’s payoff is 
$p(1-q) + \lambda(1-p)q$ while Column’s is $q(1-p) + \lambda(1-q)p$.  
But a straightforward symmetry argument (interchange $\lambda$ and $1$ 
where appropriate) shows that if $[e-a,e+a]$ is a best response for Row, 
then $[e-a,e+a]$ is also a best response for Column.  
Hence $(\,A,A)$ with $A=[e-a,e+a]$ becomes a (symmetric) 
\emph{muddled equilibrium}.

\subsection{Remark on Standard (Point) Responses}

Interestingly, even if one player is not ambiguous, the single‐point $p=e$ remains a 
best response in the symmetric ambiguous equilibrium. Thus even when one player 
“veils” her strategy in an interval, the other may still choose to 
play a single‐point mix, with the same aggregator payoff resulting.  In this sense, we might say that ambiguous users can profitably trade with ambiguous and non-ambiguous users alike.

\bibliography{ref}

\end{document}